\documentclass[article,notitlepage,amsthm,amsmath,amssymb]{revtex4-1}

\usepackage{graphicx}
\usepackage{amsmath}
\usepackage{amssymb}
\usepackage{amsthm}
\usepackage{algpseudocode}
\usepackage{algorithm}
\usepackage{enumitem}
\usepackage{float}
\usepackage{hyperref}
\usepackage{tikz}
\usepackage{thm-restate}

\newcommand{\be}{\begin{equation}}
\newcommand{\ee}{\end{equation}}
\newcommand{\ba}{\begin{array}}
\newcommand{\ea}{\end{array}}
\newcommand{\bea}{\begin{eqnarray}}
\newcommand{\eea}{\end{eqnarray}}

\newcommand{\ra}{\rangle}
\newcommand{\la}{\langle}

\newcommand{\FF}{\mathbb{F}}

\newcommand{\Gate}[1]{\textsc{#1}}

\newcommand{\cnotgate}{\Gate{CNOT}} 
\newcommand{\czgate}{\Gate{CZ}}

\newcommand{\xgate}{\Gate{X}}

\newcommand{\zgate}{\Gate{Z}}

\newcommand{\eq}[1]{Eq.~(\ref{eq:#1})}
\renewcommand{\sec}[1]{\hyperref[sec:#1]{Section~\ref*{sec:#1}}}
\newcommand{\ssec}[1]{\hyperref[ssec:#1]{Subsection~\ref*{ssec:#1}}}
\newcommand{\appe}[1]{\hyperref[appe:#1]{Appendix~\ref*{appe:#1}}}
\newcommand{\fig}[1]{\hyperref[fig:#1]{Figure~\ref*{fig:#1}}}
\newcommand{\tab}[1]{\hyperref[tab:#1]{Table~\ref*{tab:#1}}}
\newcommand{\lem}[1]{\hyperref[lem:#1]{Lemma~\ref*{lem:#1}}}
\newcommand{\thm}[1]{\hyperref[thm:#1]{Theorem~\ref*{thm:#1}}}

\newcommand{\ket}[1]{{\left\vert{#1}\right\rangle}}

\newtheorem{lemma}{Lemma}
\newtheorem{fact}{Fact}

\newtheorem{theorem}{Theorem}

\begin{document}

\title{Constant-cost implementations of Clifford operations and multiply controlled gates using global interactions}

\author{Sergey Bravyi}
\affiliation{IBM Quantum, IBM T. J. Watson Research Center, Yorktown Heights, NY 10598, USA}
\author{Dmitri Maslov}
\affiliation{IBM Quantum, IBM T. J. Watson Research Center, Yorktown Heights, NY 10598, USA}
\author{Yunseong Nam}
\affiliation{Department of Physics, University of Maryland, College Park, MD 20742, USA}

\begin{abstract}
We consider quantum circuits composed of single-qubit operations
and global entangling gates generated by  Ising-type Hamiltonians.
It is shown that such circuits can implement 
 a large class of unitary operators commonly used in quantum algorithms
at a very low cost—using a constant or effectively constant number of global entangling gates.
Specifically,  we report 
constant-cost implementations of Clifford operations with and without ancillae, constant-cost implementation of the multiply controlled gates with linearly many ancillae, and an $O(\log^*(n))$ cost implementation of the $n$-controlled single-target gates using logarithmically many ancillae.  This shows a significant asymptotic advantage of circuits enabled by the global entangling gates.
\end{abstract}

\maketitle

\section{Introduction}\label{sec:Intro}

Executing a quantum computer program requires its compilation to native gates directly supported by the controlling apparatus. While most of the existing quantum computing hardware supports arbitrary single-qubit gates as their native physical-level instruction, the choice of entangling multi-qubit operations may vary strongly across different platforms. Here we focus on leveraging parallel global instructions available in some quantum computer architectures~\cite{nigg2014quantum, grzesiak2020efficient, bassler2022synthesis}.
Such global operations draw their power from the ability to apply a layer of pairwise commuting two-qubit gates simultaneously. A notable example is Molmer–Sorensen gate available in the trapped-ion architecture~\cite{sorensen1999quantum,grzesiak2020efficient} that evolves a system of qubits under the Ising-type Hamiltonian. 
In this work, we report a streamlined implementation of two common quantum computing subroutines enabled by global entangling gates: elements of the Clifford group and multiply-controlled gates.  Clifford operations play a central role in quantum error correction, certain simulation 
algorithms~\cite{van2020circuit}
and a variety of applications based on 
pseudo-random unitary operators~\cite{huang2020predicting,haferkamp2020quantum,elben2022randomized}.
Multiply-controlled gates are
ubiquitous in applications that rely on
arithmetic circuits or the quantum singular value
transformation~\cite{gilyen2019quantum,martyn2021grand}.

We consider computational primitives that implement a selectable set of commuting two-body Ising interactions in one go, which we call global tunable gates.  We furthermore treat single-qubit gates as a free resource, which is justified since the implementation of entangling operators on the physical level tends to be more difficult and error-prone compared to the single-qubit gates.  Such a selection of the costing metric implies that the particular nature of the Ising interaction (e.g., XX, YY, ZZ) used is irrelevant---indeed, the results hold uniformly for all possible choices of the interaction since the changes between them are accomplished by the single-qubit gates (basis change) and can be merged into the single-qubit gate layers.  For the sake of simplicity of mathematical expressions, we chose to work with $\czgate^a$ interaction defined as
$\czgate^a\,{=} e^{i \pi a|11\ra\la 11|}$. Here $a{\in} [0,1]$ is a tunable parameter.  In other words, $\czgate^0$ is equivalent to the identity, whereas $\czgate^1\,{=}\,\czgate$ is single-qubit equivalent to the $\cnotgate$ gate.
We define an $n$-qubit global tunable (GT) gate as a 
diagonal unitary operator
\[
\prod_{1\le i<j\le n} \czgate^{a(i,j)}_{i,j},
\]
where $i$ and $j$ indicate pairs of qubits
acted upon by each $\czgate^a$ gate
and we allow the parameter $a{=}a(i,j)$
to be tuned individually for each pair of qubits $i$ and $j$. We assume all-to-all qubit connectivity.

Multiple versions of GT gates have been experimentally demonstrated~\cite{nigg2014quantum, figgatt2019parallel, landsman2019two, grzesiak2020efficient}. 
Shown in particular in \cite{grzesiak2020efficient} was the pulse complexity required for a GT gate is not much more than that required for a single two-qubit gate, i.e., $3n{-}1$ degrees of freedom used for a GT instruction and $2n{+}1$ for a two-qubit gate, for an $n$-qubit quantum computer. This is understood by considering the pulse design space, where some number of degrees of freedom must first be spent to decouple the computational states from the mutually shared information bus at the end of the entangling operation ($2n$) and then using additional degrees of freedom to induce the desired degree of entanglement between qubit pairs of choice. Since there are at most $O(n^2)$ pairs to entangle, $O(n)$ additional degrees of freedom introduced to each of $n$ qubits suffice, making the hardware-level requirement for a GT gate and that for a two-qubit gate not too different.

Our main contributions are as follows. First, we show that any $n$-qubit Clifford operation can be implemented using a constant number of GT gates (at most $21$) using no ancilla. The number of GT gates is at most four if $n$ ancillae are available. The latter construction meets the information-theoretic lower bound that applies to the no-ancilla case.  
Second, we show how to construct a multiply-controlled Toffoli (Toffoli-$n$) gate using $O(\log^*(n))$ \footnote{Here
$\log^*(n)$ denotes the iterated logarithm of $n$  defined as the smallest integer $p$
such that $\log{ \log{\cdots \log{(n)}}}\le 1$, where the logarithm is taken $p$ times.} GT gates with $O(\log(n))$ ancillae or using $4$ GT gates with $O(n)$ ancillae.
We also consider
{\em adaptive} quantum circuits that include
intermediate measurements and classical feedforward.
We show that Toffoli-$n$ can be implemented 
by an adaptive circuit with only two GT gates
and $O(n)$ ancillae. Our results achieve a substantial improvement over the state-of-the-art. 

The rest of the paper is organized as follows. \sec{prevwork} summarizes known relevant compilation results.  \sec{cliff} and \sec{mct} report our main results---explicit circuit constructions for $n$-qubit Clifford operators and the multiply-controlled Toffoli.  We finish with a discussion of our findings and further outlook in \sec{disc}.

\section{Previous work}\label{sec:prevwork}

The use of the GT gates to efficiently implement an $n$-qubit Clifford operation has been studied in the literature.  An implementation with $12n\,{+}\,O(1)$ GT gates requiring no ancilla was developed in \cite{maslov2018use}, which was superseded by \cite{van2021constructing} that improved it to $6n\,{+}\,O(1)$ GT gates, still using no ancilla.  Subsequently in \cite{grzesiak2022efficient} the complexity was improved to $6\log(n)\,{+}\,O(1)$ GT gates, however, with $n/2$ ancillae.  Recently in \cite{bassler2022synthesis}, an ancilla-free, $(n,2n]$-GT method was reported.  Finally, an ancilla-free implementation with $2\log(n)\,{+}\,O(1)$ GT gates follows from \cite{maslov2022depth} by noticing that $R_A^{-1}W01$ in formula (4) therein
is a single GT gate and non-overlapping GT gates can be implemented in parallel.  In all these approaches, one employs a decomposition of an $n$-qubit Clifford into stages of single- and two-qubit gate layers, to then implement the two-qubit gate layers efficiently using GT gates.  Here, we too employ this 
decomposition, see e.g. Lemma~8 of \cite{bravyi2021hadamard}, where any $n$-qubit Clifford $U$ can be realized by a layered circuit
\be
\label{eq:Clifford}
\mbox{$U$ = -L-CX-CZ-L-CZ-L-}.
\ee
Here -L- stands for a layer of single-qubit gates, -CX- is a linear reversible circuit, and -CZ- is a layer of $\czgate$ gates.  In contrast to previous results, we show how to implement arbitrary Clifford operation using $O(1)$ GT gates either with or without ancilla (the constant is smaller with ancilla). We note that a single GT gate suffices if the input state of a Clifford circuit is a fixed basis vector, such as $|0^n\ra$. Indeed, in this case, the output state $U|0^n\ra$ can be prepared by a layered circuit -L-CZ-L-, as follows from the well-known local equivalence between stabilizer states and graph states~\cite{schlingemann2001stabilizer}. In contrast, our implementation with $O(1)$ GT gates works for an arbitrary input state.

GT gates were also used to efficiently implement a Toffoli-$n$, implying same asymptotic for arbitrary multi-controlled unitaries.  Briefly, a $3$-GT construction of Toffoli-3 was reported in \cite{martinez2016compiling} and a $7$-GT construction of Toffoli-4 was reported in \cite{ivanov2015efficient}.  Toffoli-4 was improved to rely on $3$ GTs in \cite{maslov2018use}, with one clean ancilla, along with  $3n\,{+}\,O(1)$ GT implementation of a multiply-controlled Toffoli using $n/2\,{+}\,O(1)$ ancillae.  An ancilla-free implementation (of, technically speaking, multicontrolled $i\xgate$) that takes $2n$ GT gates was reported in \cite{groenland2020signal}.  In all of these references, the GT gates used were the single-angle, all-pair (drawn from a selectable subset of qubits) kind.  Indeed, when using a more general GT, the complexity can be improved to $O(\log(n))$ by nesting Toffoli-$m$, $m\leq n$, in a depth-optimal fashion with $O(n)$ ancillae, or $3n/2$ GTs with no more than seven ancillae~\cite{grzesiak2022efficient}.

\section{Clifford operations}\label{sec:cliff}

We focus on implementing $n$-qubit Clifford operations by quantum circuits composed of GT gates and ``free" single-qubit gates.  Here, we use only Clifford GT gates that apply $\czgate$ to a subset of qubits.  Following Ref.~\cite{van2021constructing} we will refer to such gates as Global $\czgate$ (GCZ).  Our main result is the following. 
\begin{theorem}
\label{thm:main}
Any $n$-qubit Clifford operator can be implemented using $4$ GCZs with $n$ ancillae
or using $26$ GCZs with no ancilla.
\end{theorem}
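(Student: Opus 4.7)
The argument starts from the $6$-layer canonical decomposition of an arbitrary $n$-qubit Clifford given in \eq{Clifford}, i.e.\
\[
U \;=\; L_1 \cdot \mathrm{CX} \cdot \mathrm{CZ}_1 \cdot L_2 \cdot \mathrm{CZ}_2 \cdot L_3,
\]
where each $L_i$ is a layer of single-qubit Cliffords (free), each $\mathrm{CZ}_j$ is a layer of $\czgate$ gates, and $\mathrm{CX}$ is an arbitrary $n$-qubit linear reversible circuit corresponding to some invertible matrix $A \in GL_n(\FF_2)$. By the very definition of a GCZ gate, each $\mathrm{CZ}$-layer is realized by a single GCZ whose tunable parameters lie in $\{0,1\}$, so these two layers contribute exactly $2$ GCZs regardless of $U$. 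Both halves of the theorem then reduce to implementing the one remaining linear reversible block $A$ with a constant number of GCZs: $2$ of them when $n$ ancillae are available, and at most $24$ of them with no ancilla.

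For the ancilla-assisted statement I would establish a lemma showing that any $A \in GL_n(\FF_2)$ can be implemented on $n$ data qubits with the help of $n$ ancillae initialized to $\ket{0}$, using exactly $2$ GCZ gates plus free single-qubit Cliffords, with the ancillae restored to $\ket{0^n}$ and the output $\ket{Ax}$ produced on what is then relabeled as the data register. The construction is a non-adaptive bipartite gate teleportation: Hadamards turn the ancillae into $\ket{+^n}$; a first GCZ placing $\czgate$ across the data-ancilla cut whenever $A_{ji}=1$, sandwiched between ancilla Hadamards, realizes the map $\ket{x}\ket{0^n} \mapsto \ket{x}\ket{Ax}$ on the two registers; a symmetric second round using the pattern of $A^{-1}$ and Hadamards on the data register then clears the data register to $\ket{0^n}$. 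What makes a single GCZ suffice is the bipartite cut: the data-to-ancilla block of a symmetric GCZ pattern can be an arbitrary $\FF_2$ matrix, so the (generally non-symmetric) matrix $A$ can be encoded directly. Counting this lemma against the $2$ GCZs from the $\mathrm{CZ}$-layers gives the $4$-GCZ bound.

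For the ancilla-free statement I would decompose $A$ as a constant-length product $A = B_1 B_2 \cdots B_m$ of structured factors, each implementable with $O(1)$ GCZs on a single register. A Bruhat-style factorization $A = P_1 L P_2 U P_3$ is a natural starting point: permutation matrices $P_i$ are free, being mere qubit relabelings, while each unit-triangular factor can be converted into a symmetric $\FF_2$ pattern (and hence into a single GCZ) by conjugating an appropriately chosen subset of qubits with Hadamards and cleaning up with additional single-qubit layers. Careful accounting of the Hadamard conjugations that sit between consecutive factors --- each of them absorbable into the free $L_i$ layers of \eq{Clifford} --- gives an explicit GCZ count for $A$. Combined with the $2$ GCZs for the $\mathrm{CZ}$-layers and a final merging of adjacent single-qubit layers, this delivers the $26$-GCZ upper bound.

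The principal obstacle is the ancilla-free step: a single-register GCZ pattern is intrinsically symmetric over $\FF_2$, so implementing a general $A \in GL_n(\FF_2)$ requires expressing it, up to Hadamard conjugation and qubit permutation, as a short product of essentially symmetric pieces. Proving such a constant-length factorization and tracking the interleaving Hadamards tightly enough to stay under $24$ GCZs is where the technical work concentrates; the ancilla-assisted construction, by contrast, sidesteps the symmetry constraint entirely thanks to the bipartite cut between data and ancilla and is essentially a two-round non-adaptive teleportation.
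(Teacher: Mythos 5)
Your ancilla-assisted half is correct and close in spirit to the paper's: the paper likewise reduces everything to bipartite $\cnotgate$ blocks $\mathsf{CX}_{1,2}(M)\colon |x,y\ra \mapsto |x,\, y\oplus Mx\ra$, each a single GCZ after Hadamard conjugation of the target register, precisely because the data-to-ancilla block of a GCZ pattern can be an arbitrary matrix. Your two-round version $|x,0^n\ra \mapsto |x,Ax\ra \mapsto |0^n,Ax\ra$ costs $2$ GCZs and hands the output to the relabeled register; the paper instead uses a three-GCZ gadget ($C_3'$ of \lem{1}) that keeps the data in place and then merges one of its GCZs with an adjacent $\czgate$ layer. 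Both yield the stated total of $4$, and register relabeling is treated as free elsewhere in the paper, so this part stands.

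The ancilla-free half has a genuine gap. The step that fails is the claim that each unit-triangular factor of a Bruhat decomposition ``can be converted into a symmetric $\FF_2$ pattern (and hence into a single GCZ) by conjugating an appropriately chosen subset of qubits with Hadamards.'' A single GCZ conjugated by single-qubit gates is a layer of pairwise commuting two-qubit gates; for the result to be a linear reversible circuit the qubits must split into a set of controls and a disjoint set of targets, so the reachable linear maps are exactly $x\mapsto (I\oplus M)x$ with $M$ supported on one off-diagonal block of a bipartition, and in particular $M^2=0$. A generic unit-triangular $L=I\oplus N$ has $N^2\neq 0$ already for $n=3$ (e.g.\ the map $(x_1,x_2,x_3)\mapsto(x_1,\,x_1{\oplus}x_2,\,x_1{\oplus}x_2{\oplus}x_3)$, whose natural circuit $\cnotgate(1,2)\,\cnotgate(2,3)$ contains non-commuting gates), so it is not one GCZ under any Hadamard dressing. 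Consequently $A=P_1LP_2UP_3$ does not reduce the problem to $O(1)$ GCZs; it reduces it to two instances of a problem essentially as hard as the original. The missing idea, and the heart of the paper's argument, is qualitatively different: split the $n$ qubits into three registers so that each can serve as dirty ancilla space for the others, block-diagonalize $A$ at a cost of $5$ GCZs, and invoke Thompson's theorem that every $A\in\mathrm{GL}(k)$ with $k\ge 3$ is a group commutator $A=D^{-1}B^{-1}DB$; four applications of the bipartite gadget of \lem{1} across register pairs then realize the commutator in $12$ GCZs with no clean ancillae, giving $23$ GCZs for the linear layer and $25$ or $26$ overall. Without the commutator decomposition, or some equally nontrivial constant-length factorization of $\mathrm{GL}(n,\FF_2)$ into bipartite block transvections, your route does not reach the $26$-GCZ bound.
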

To prove the theorem we first show how to realize
a special class of Clifford operators using $3$ GCZs.
Let $\mathrm{GL}(n)$ be the group of binary $n\,{\times}\,n$ invertible matrices.  Here and below multiplication and inversion for binary matrices is defined over the binary field $\FF_2$.
\begin{lemma}
\label{lem:1}
For any matrix $A\,{\in}\,\mathrm{GL}(n)$ there exist $2n$-qubit Clifford circuits $C_3(A)$ and $C_3'(A)$, each with the GCZ cost of $3$, such that 
\be
\label{eq:C3}
C_3(A) |x,y\ra = |A^{-1} y,Ax\ra \quad \mbox{and}
\ee
\be
\label{eq:C3'}
C_3'(A) |x,y\ra = |y\oplus Ax, A^{-1} y\ra
\ee
for all $x,y\in \FF_2^n$. These circuits require no ancilla. 
\end{lemma}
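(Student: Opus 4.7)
The plan is to view each target map as a linear reversible transformation on $2n$ bits, factor it into three cross-register CNOT stages, and realize each stage with a single GCZ sandwiched between Hadamards on the target register.

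The key primitive is that a CNOT network from register $1$ (controls) to register $2$ (targets) realizing $y \mapsto y + Mx$ for arbitrary $M \in \FF_2^{n\times n}$ equals $H_2 \cdot G \cdot H_2$, where $H_2$ is a Hadamard layer on register $2$ and $G$ is the GCZ whose cross-register edges $(i,\,n{+}j)$ encode exactly the entries $M_{j,i}{=}1$. This follows from the single-pair identity $H_t \czgate_{c,t} H_t = \cnotgate_{c\to t}$ together with the mutual commutation of all reg-$1$-to-reg-$2$ CNOTs. The analogous construction with $H_1$ gives CNOT networks in the reverse direction. Each such network costs exactly one GCZ.

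For $C_3(A)$ I would use the block-matrix factorization
\[
\begin{pmatrix} 0 & A^{-1} \\ A & 0 \end{pmatrix} = \begin{pmatrix} I & A^{-1} \\ 0 & I \end{pmatrix}\begin{pmatrix} I & 0 \\ A & I \end{pmatrix}\begin{pmatrix} I & A^{-1} \\ 0 & I \end{pmatrix},
\]
a one-line identity in $\FF_2$ using $A A^{-1} = I$. The three factors correspond, read right to left, to $x \mapsto x + A^{-1}y$, then $y \mapsto y + Ax$, then $x \mapsto x + A^{-1}y$, each realized by a single GCZ via the primitive. Intermediate Hadamard layers merge into free single-qubit layers, yielding a three-GCZ circuit.

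For $C_3'(A)$ the target matrix $\begin{pmatrix} A & I \\ 0 & A^{-1} \end{pmatrix}$ is not anti-diagonal, so the same factorization fails. I would instead posit by ansatz a three-stage alternating sequence reg-$1$-to-reg-$2$, reg-$2$-to-reg-$1$, reg-$1$-to-reg-$2$ with unknown coefficient matrices, compose symbolically, and match the result to the target. This yields the stages $y \mapsto y + (A+I)x$, then $x \mapsto x + y$, then $y \mapsto y + (A^{-1}+I)x$, confirmed by direct expansion in characteristic $2$. The main obstacle is identifying this specific template and its coefficient matrices, since no standard triangular factorization produces a three-stage alternating form for this target block matrix. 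Once the decomposition is fixed, the three GCZs assemble as before with intermediate Hadamards absorbed into single-qubit layers.
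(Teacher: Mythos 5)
Your proposal is correct and follows essentially the same route as the paper: realize each cross-register \cnotgate{} layer as one GCZ via Hadamard conjugation, and compose three such layers, with your $C_3'(A)$ decomposition ($y\mapsto y\oplus(I{\oplus}A)x$, $x\mapsto x\oplus y$, $y\mapsto y\oplus(I{\oplus}A^{-1})x$) matching the paper's $\mathsf{CX}_{1,2}(I{\oplus}A^{-1})\,\mathsf{CX}_{2,1}(I)\,\mathsf{CX}_{1,2}(I{\oplus}A)$ exactly. Your $C_3(A)$ uses the reversed ordering $\mathsf{CX}_{2,1}(A^{-1})\,\mathsf{CX}_{1,2}(A)\,\mathsf{CX}_{2,1}(A^{-1})$ rather than the paper's $\mathsf{CX}_{1,2}(A)\,\mathsf{CX}_{2,1}(A^{-1})\,\mathsf{CX}_{1,2}(A)$, but the paper itself notes these are equal, so the difference is immaterial.
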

\begin{proof}
Given a binary $n\,{\times}\,n$ matrix $M$ and a pair of $n$-qubit registers $R_1$ and $R_2$, let $\mathsf{CX}_{1,2}(M)$ be a $2n$-qubit unitary operator that EXORs the $i$-th qubit of $R_1$ to the $j$-th qubit of $R_2$ for each pair pair $i,j\in \{1,2,\ldots,n\}$ such that $M_{i,j}\,{=}\,1$.
Here and below EXOR stands for the Exclusive OR.
 By definition we have 
$\mathsf{CX}_{1,2} (M)|x,y\ra = |x,y\oplus Mx\ra$
and  $\mathsf{CX}_{2,1} (M)|x,y\ra = |x\oplus My,y\ra$
for all $x, y\in \FF_2^n$. 
These operators can be implemented with a single GCZ by conjugating
each qubit in the register $R_2$ and $R_1$ respectively by the Hadamard gate
(which converts all $\cnotgate$s to $\czgate$s) and noticing that any layer of $\czgate$s 
is a single GCZ gate.
Now the desired transformations in Eqs.~(\ref{eq:C3},\ref{eq:C3'}) can be implemented as
\[
C_3(A) =  \mathsf{CX}_{1,2}(A) \, \mathsf{CX}_{2,1}(A^{-1}) \,  \mathsf{CX}_{1,2}(A)
\]
and 
\[
C_3'(A) = \mathsf{CX}_{1,2}(I{\oplus}A^{-1}) \,  \mathsf{CX}_{2,1}(I) \, \mathsf{CX}_{1,2}(I {\oplus} A).
\]

\end{proof}
We are now ready to prove Theorem~\ref{thm:main}.
\begin{proof}
According to \eq{Clifford}, it takes two -CZ- layers and one -CX- layer to implement arbitrary $n$-qubit Clifford operation $U$.  This implementation uses no ancilla.  Each -CZ- layer is, trivially, a single GCZ gate.  Thus below we focus on the -CX- layer.
Let $V$ be the circuit implementation of the -CX- layer in \eq{Clifford}.  We have  
$V|x\ra = |Ax\ra$
for all $x\,{\in}\, \FF_2^n$  and some matrix $A\,{\in}\, \mathrm{GL}(n)$.  
Using \lem{1} with $y\,{=}\,0^n$ one concludes that the map $\ket{x,0^n} \mapsto \ket{Ax, 0^n}$ can be implemented with $3$ GCZs, see \eq{C3'}.  Equivalently, $V$ admits a  $3$-GCZ implementation using $n$ ancillae initialized to and returned in the state $|0\ra$.  This shows that $U$ admits a $5$-GCZ implementation using $n$ ancilla.  Moreover, the left -CZ- layer in \eq{Clifford} can be merged  with the rightmost GCZ gate in the implementation of $V$.  Indeed, this -CZ- layer acts non-trivially only on the data/top $n$ qubits and thus it commutes with the layer of Hadamards in the implementation of $V$ which acts non-trivially only on the ancillary/bottom $n$ qubits.  Thus $U$ admits a $4$-GCZ implementation using $n$ ancilla.  This proves the first part of the theorem.

To prove the second part, we develop an ancilla-free implementation with at most $25$ GCZs when $3|n$ ($3$ divides $n$) and $26$ GCZs otherwise.  Based on \eq{Clifford}, it suffices to show that any $n$-qubit linear reversible circuit $V$ can be implemented using at most $23$ GCZs with zero ancilla (when $3|n$).  

Let us first describe a $12$-GCZ implementation of a $3k$-qubit linear reversible circuit $W(A)$ that uses $2k$ {\em dirty} ancillae (that is, ancillae that store an unknown data and must be returned in the initial state) to implement a $k$-qubit linear reversible transformation $\ket{x} \,{\mapsto}\, \ket{Ax}$ up to a register swap.  Specifically, suppose the initial state of $k$ data qubits and $2k$ ancillae is $\ket{x,y,z}$.  We would like to implement the map
\be
\label{eq:VII}
W(A)|x,y,z\ra=|z,Ax,y\ra.
\ee

We will use the following fact proved in~\cite{thompson1960commutators}.
\begin{fact}
Suppose $n\,{\ge}\,3$. Then any matrix $A\,{\in}\, \mathrm{GL}(n)$ can be expressed as a group commutator $A=D^{-1}B^{-1}DB$ for some $B,D\in \mathrm{GL}(n)$.
\end{fact}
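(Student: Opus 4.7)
The plan is to use conjugation-invariance of ``being a commutator'' in $\mathrm{GL}(n)$ to reduce to canonical forms. If $A = D^{-1}B^{-1}DB$ and $P \in \mathrm{GL}(n)$, then $PAP^{-1} = (PDP^{-1})^{-1}(PBP^{-1})^{-1}(PDP^{-1})(PBP^{-1})$, so the commutator property is preserved under conjugation. Because the conjugacy classes of $\mathrm{GL}(n,\FF_2)$ are parameterized by rational canonical form, it suffices to show that every block-diagonal matrix whose diagonal blocks are the companion matrices of the invariant factors of $A$ is a commutator.

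First I would treat the base case of a single companion block $C_f$ of size $n \ge 3$. Taking an ansatz with $B$ a cyclic shift and $D$ triangular with free entries, the equation $D^{-1}B^{-1}DB = C_f$ reduces to a system of linear equations in the entries of $D$ over $\FF_2$, whose solvability can be checked directly from the shape of $C_f$. For the general case $A = C_{f_1} \oplus \cdots \oplus C_{f_r}$ with $\sum_i \deg(f_i) = n \ge 3$, the natural approach is to build $B$ and $D$ block-diagonally so that each block is handled independently. The obstruction is that $\mathrm{GL}(1,\FF_2)$ is trivial and $\mathrm{GL}(2,\FF_2) \cong S_3$ is not perfect, so blocks of size $1$ or $2$ cannot always be realized as commutators in isolation. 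The remedy is to coalesce each small block with at least one neighbor so that the union has size $\ge 3$; such a coalescing always exists when $n \ge 3$, and the coalesced block can itself be conjugated to rational canonical form and handled by the single-block case applied inside $\mathrm{GL}(m,\FF_2)$ with $m \ge 3$.

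The main obstacle I anticipate is the single-block construction over the small field $\FF_2$: the limited number of scalars leaves few degrees of freedom in the ansatz, so one must check solvability of the commutator equation uniformly across all monic polynomials of degree $\ge 3$, including the awkward purely Jordan-type cases such as $f(x) = (x+1)^n$ where the equation is most constrained. Thompson's original argument handles these via an explicit parameterized construction together with a short induction on $n$ and a finite check of the smallest base cases ($n = 3, 4$); I would follow the same strategy, spending most of the effort on verifying that the finite base cases close the induction.
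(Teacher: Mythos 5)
The paper offers no proof of this Fact: it is quoted directly from Thompson's thesis \cite{thompson1960commutators}, so there is no in-paper argument to compare against. Your proposal is a plan to reconstruct Thompson's proof, and while it correctly identifies the relevant difficulties (the small field $\FF_2$, the non-perfect groups $\mathrm{GL}(1,\FF_2)$ and $\mathrm{GL}(2,\FF_2)$, the unipotent cases), it contains one concrete structural flaw and defers essentially all of the substantive work.

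The structural flaw is in the reduction. After passing to rational canonical form you propose to handle companion blocks separately, coalescing blocks of size $1$ or $2$ with neighbors until every group has size at least $3$, and then to apply the single-companion-block case to each group. But a direct sum of companion matrices is itself cyclic (similar to a single companion matrix) only when the corresponding polynomials are pairwise coprime, and this is exactly what fails for invariant factors, since each divides the next. So a coalesced group of small blocks is in general \emph{not} cyclic --- take $A=I_3$, or $A$ unipotent in $\mathrm{GL}(4,\FF_2)$ with invariant factors $(x{+}1)^2,(x{+}1)^2$ --- and the single-block case never applies to it; the reduction is circular for precisely those matrices whose canonical form has no large block. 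A separate argument is needed for non-cyclic matrices, and this is where Thompson (and later work on the Ore conjecture) does the real work, e.g.\ by showing one can choose $D$ so that $D$ and $DA$ are conjugate, which immediately yields $A=D^{-1}(B^{-1}DB)$ for a suitable $B$. Beyond that, the single-block case itself is only asserted: for fixed $B$ the equation $DB=BDC_f$ is indeed linear in the entries of $D$, but invertibility of $D$ is a non-linear side condition, and the claim that the system is solvable uniformly over all monic $f$ of degree at least $3$ over $\FF_2$ is the entire content of the theorem and is nowhere verified. As written, the proposal is a roadmap with a broken junction rather than a proof.
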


Apply the circuit $C_3(A)$ constructed in \lem{1} four times with the matrix $A$ replaced by $B$ (first application), $D$ (second application), $B^{-1}$ (third application), and $D^{-1}$ (fourth application).  This gives the sequence of transformations
\begin{align*}
|x,y,z\ra &\xrightarrow{C_3(B)}
|B^{-1}y,Bx,z\ra \xrightarrow{C_3(D)}
|B^{-1}y, D^{-1} z, DBx \ra \\ 
& \xrightarrow{C_3(B^{-1})} 
|B^{-1} DBx, D^{-1} z,y\ra\\
&\xrightarrow{C_3(D^{-1})} 
|z,D^{-1} B^{-1} DBx, y\ra
= |z,Ax,y\ra.
\end{align*}
In other words, we implemented the $k$-qubit transformation $\ket{x} \mapsto \ket{Ax}$ using $2k$ dirty ancilla up to qubit register permutation using four applications of $C_3(\cdot)$, which costs $12({=}\,4{\cdot}3)$ GCZs.  

Assume that $3|n$.  We want to implement the map $|v\ra \,{\mapsto}\, |Av\ra$ with a given $A{\in} \mathrm{GL}(n)$ using constantly many GCZ gates and no ancilla.  Specifically, write $A$ as
\begin{equation}\label{eq:3x3}
A = \left[ \ba{ccc}
A_{00} & A_{01} & A_{02} \\
A_{10} & A_{11} & A_{12} \\
A_{20} & A_{21} & A_{22} \\
\ea
\right].
\end{equation}
Here each block $A_{i,j}$ has size $k$, where $k{=}n/3$, which is well-defined given $3|n$. First, we use two GCZ gates to transform $A$ into a form where $A_{00}$ and $\left(\begin{smallmatrix}A_{00} & A_{01}\\A_{10} & A_{11} \end{smallmatrix}\right)$ are invertible.

To accomplish this, label the individual matrix rows as $x_1,x_2,...,x_k,y_1,y_2,...,y_k,z_1,z_2,...,z_k$ top to bottom. Consider the first $k$ elements of these rows. We need to make $x_1,x_2,...,x_k$ restricted to the first $k$ bits be linearly independent.  This can be done by adding a row from the $y_{*}$ or $z_{*}$ sets onto some of $x_{*}$ rows.  This is accomplished in parallel by a depth-1 $\cnotgate$ circuit made with $\cnotgate(y_i,x_j)$ or $\cnotgate(z_i,x_j)$ gates. The $x_{*}$ rows are now linearly independent as prefix rows of length $k$, and as a result also as prefix rows of length $2k$. Thus, to make the first $2k$ rows linearly independent as length-$2k$ prefix rows, we can add some $\cnotgate(z_h,y_i)$ in the $\cnotgate$ depth 1.  Composing both stages we obtain the transformation of $A$ into the desired form at the GCZ cost of two, since each $\cnotgate$ layer can be implemented as a GCZ cost-1 operation.  


Next, we use $3$ GCZ gates to transform $A$ into the block-diagonal form such that the only non-zero blocks are $A_{00}$, $A_{11}$, and $A_{22}$.
First, apply the transformation 
\begin{equation}
\label{block_diag2}
A \gets 
\left[ \ba{ccc}
I & 0 & 0 \\
A_{10}A_{00}^{-1} & I & 0 \\
A_{20}A_{00}^{-1} & 0 & I \\
\ea
\right]\cdot A.
\end{equation}
It sets to zero the blocks $A_{10}$ and $A_{20}$ (and, possibly, modifies the blocks $A_{11}$, $A_{12}$, $A_{21}$, and $A_{22}$). This transformation can be implemented by the $\cnotgate$ gates with controls on the top third of qubits and targets on the bottom two thirds and thus it costs one GCZ gate (and some extra Hadamard gates). Since we assumed that $\left(\begin{smallmatrix}A_{00} & A_{01}\\A_{10} & A_{11} \end{smallmatrix}\right)$ is invertible, after the transformation Eq.~(\ref{block_diag2}) the block $A_{11}$ is invertible.
Thus the same argument applies to the remaining off-diagonal blocks $A_{01}$ and $A_{21}$, and then again for the pair of blocks $A_{02}$ and $A_{12}$.  Transforming $A$ to the block-diagonal form 
\be
\label{eq:Adiag}
A \gets \left[ \ba{ccc}
A_{00} & 0 & 0 \\
0 & A_{11} & 0 \\
0 & 0 & A_{22} \\
\ea
\right]
\ee
thus costs $5$ GCZs.  Note that each diagonal block describes a linear reversible circuit acting non-trivially on exactly $n/3$ qubits.
To implement all three, we can combine $C_3$ from \lem{1} twice (first time applied to first and third multiregisters, second time applied to first and second multiregisters) with the 12-GCZ circuit implementing the transformation in \eq{VII} for $k\,{=}\,n/3$ as follows:
\begin{align*}
|x,y,z\ra & \xrightarrow{C_3(A_{00})} |A_{00}^{-1}z, y, A_{00}x \ra \\ & \xrightarrow{C_3(A_{22}A_{00})} 
|(A_{22}A_{00})^{-1}y, A_{22}z, A_{00}x\rangle \\
& \xrightarrow{W(A_{11}A_{22}A_{00})} 
|A_{00}x,A_{11}y, A_{22}z\ra.
\end{align*}
The overall GCZ cost of $A$ is $5\,{+}\,3{\cdot}2\,{+}\,12 = 23$, using zero ancilla.  Thus the overall cost of the Clifford $U$ when $3|n$ is $25$ GCZs. 

Finally, consider the case $3{\not|}n$.  Suppose $n\,{=}\,3k{+}2$ ($n\,{=}\,3k{+}1$ is handled similarly).  Then, in \eq{3x3} the qubits are broken into the sets of $k$, $k{+}1$, and $k{+}1$.  To employ the circuit $W(A)$ in \eq{VII}, we have one less qubit than we need. To compensate for this lack of a qubit, an intermediate step is used after the reduction to block diagonal form, \eq{Adiag}, where, at the cost of one GCZ gate, we disentangle/diagonalize one qubit each in the two qubit sets of the size $k{+}1$.  Then we have a block diagonal decomposition with five blocks, of which three have sizes $k\,{\times}\, k$, and the remaining two have sizes $1{\times} 1$. The circuit $W(A)$ can now be implemented since there is enough ancillary space to accommodate it.  The overall cost of the Clifford $U$ for arbitrary $n$ is thus at most $26$ GCZs.

In \appe{A} we describe 
local optimizations reducing the cost further to $20$ or $21$ GCZs, depending on whether $3|n$.
\end{proof}

\section{Multiply-Controlled Toffoli}\label{sec:mct}

We show two different methods to implement an $n$-qubit Toffoli gate. The first uses $O(\log(n))$ ancillae and $O(\log^*(n))$ GT gates. The second uses $O(n)$ ancillae and $O(1)$ GT gates.
Below we assume $n{\ge} 3$.
Let $OR_n{:} \, \{0,1\}^n \,{\to}\, \{0,1\}$ be the $n$-bit Boolean OR function.
Define an $n$-qubit unitary operator 
\[
\mathsf{OR}_n |x\ra = (-1)^{OR_n(x)} |x\ra. 
\]
It flips the phase of the $n$-qubit register $\ket{x}$ iff at least one bit of $x$ is non-zero.
Clearly, $\mathsf{OR}_n$ coincides with the multiply-controlled $\zgate$ gate up to relabeling of basis states $0$ and $1$
(which, in turn, coincides with the $n$-qubit Toffoli up to   a conjugation by the Hadamard gate on the target qubit).
Here we show how to implement $\mathsf{OR}_n$ by a quantum circuit composed of single-qubit gates and fractional $\cnotgate$ gates, that are then mapped into a small number of GT gates. 

Suppose $q\,{\ge}\,0$ is an integer.  Let $X_q$ be some fixed $2^q$-th root of Pauli $X$, such that $(X_q)^{2^q}\,{=}\,X$. One can choose the primary root,
\be
\label{eq:Xq}
X_q = H \left[ \ba{cc} 1 & 0 \\ 0& \exp{[i\pi/2^q]} \\ \ea \right]H,
\ee
where $H$ is the Hadamard gate. For example, $X_0\,{=}\,X$ is Pauli-$X$.  Let the fractional $\cnotgate$ gate be 
\[
CX_q = |0\ra\la 0|\otimes I + |1\ra\la 1|\otimes X_q.
\]
Our construction closely follows~\cite{hoyer2003quantum,takahashi2016collapse}. 
Choose an integer
$p\,{:=}\,\lceil \log_2{(n{+}1)}\rceil \,{\approx}\, \log{(n)}$.
Consider a bit string $x\in \{0,1\}^n$ with the Hamming weight $w(x)=x_1+x_2+\cdots +x_n$.  Note that $OR_n(x)\,{=}\,1$ iff 
\[
(X_q)^{w(x)}|0\ra = |1\ra \;\; \mbox{for some $q=0,1,\ldots,p{-}1$}.
\]
If $w(x)$ is odd then $(X_0)^{w(x)}|0\ra=\ket{1}$. Otherwise $w(x)$ is even and thus $w(x)/2$ is integer. If $w(x)/2$ is odd then $(X_1)^{w(x)}|0\ra=(X_0)^{w(x)/2}|0\ra=\ket{1}$.  Otherwise $w(x)\,{\equiv}\,0{\pmod 4}$ and thus $w(x)/4$ is integer.  If $w(x)/4$ is odd then $(X_2)^{w(x)}|0\ra=(X_0)^{w(x)/4}|0\ra=\ket{1}$, etc.  Thus we have the identity 
\begin{align}
\label{eq1}
(-1)^{OR_n(x)}=
\la 0^p | & (X_0^{w(x)}  \otimes  X_1^{w(x)} \otimes \cdots \otimes X_{p-1}^{w(x)} )^\dag
\,\mathsf{OR}_p\,
(X_0^{w(x)}   \otimes  X_1^{w(x)} \otimes \cdots \otimes X_{p-1}^{w(x)})|0^p\ra.
\end{align}

Consider a register with $n{+}p$ qubits where the last $p$ qubits are ancillae initialized in $|0\ra$.  One can implement the gates $X_q^{w(x)}$ controlled by the Hamming weight $w(x)$ of the first $n$ qubits and target on the last $p$ qubits using $n$ fractional $\cnotgate$ gates $CX_q$.  In fact, all $np$ fractional $\cnotgate$ gates can be implemented using one GT gate, since all controls are on the first $n$ qubits and all targets are on the last $p$ qubits, hence all commute 
(note that adding left and right layers of Hadamards on the ancillae transforms each $CX_q$ gate to $\czgate^a$ with $a\,{=}\,1/2^q$).
Next, we need to implement $\mathsf{OR}_p$ on the ancilla register of size $p$, which is nothing but an exponentially reduced version of the original problem of implementing $\mathsf{OR}_n$. One can thus recursively apply the reductions until only two qubits are left, after which $\mathsf{OR}_2$ can trivially be implemented using a single two-qubit gate. This results in the $2\log^*(n)-1$ GT construction of an $n$-qubit Toffoli using $\log(n)+O(\log(\log(n))$ ancillae.

To obtain an $O(1)$ GT circuit, we avoid the recursion. Instead, as soon as the problem is reduced to that of implementing the $\mathsf{OR}_p$, we invoke two-GT construction 
to obtain $\mathsf{OR}_p$, which uses $2^p{-}p{-}1$ ancillae. This is a GT-enabled modification of a well-known method~\cite{barenco1995elementary} to implement multi-controlled $\zgate$ gate that relies on computing all possible EXORs of $p$ Boolean inputs (which takes one GT gate) -- note the EXORs with a single literal do not need to be computed on the ancillae, since they are the input.  Once all possible EXORs are available, apply the single-qubit gate $\zgate^{1/2^{p-1}}$ to all EXORs to obtain $\mathsf{OR}_p$ (to obtain multi-controlled $\zgate$ odd and even weight EXORs experience the application of phases with opposite angles).
This costs no GT gates.  Finally, uncompute the EXORs using one GCZ gate.   The overall cost of implementing $n$-fold OR is thus $4$ (=1+(1{+}1)+1) GT gates using $2^{\lceil\log(n)\rceil}\,{-}\,1 < 2n$ ancillae.  In \appe{B} we show how to halve the gate counts in the above implementation using adaptive quantum circuits.  We note that the gate count $2\log^*(n)-1$, is in fact limited by the constant $9$, since $\log^*(n)$ cannot exceed the value of $5$ due to not enough elementary particles in the known universe to constitute qubits.  Thus, in effect both reported constructions carry a constant cost.

\section{Discussion}\label{sec:disc}

Our results significantly improve over state of the art, and for the most part settle asymptotic complexities.  We show that GT gates are a very powerful primitive for compiling short quantum circuits, since a serial approach would require $O(n^2/\log(n))$ \cite{patel2008optimal} and $O(n)$ gates respectively, for the Clifford and multiply-controlled Toffoli gates. Our results demonstrate the power of parallel SIMD (Single Instruction Multiple Data) operations in quantum computer settings, not unlike its classical, conventional computer counterparts. 

Our result also implies that GT gates enable more efficient generation of pseudo-random unitary operators known as unitary $t$-designs~\cite{low2010pseudo}.  In particular, $n$-qubit approximate unitary $t$-designs can be realized with $O(1)$ GT gates for any constant $t\,{=}\,O(1)$ since such designs can be realized using single-qubit gates
and at most $\tilde{O}(t^4)$ Clifford layers~\cite{haferkamp2020quantum}. 

This study sheds more light on space-time tradeoffs in quantum circuits. In particular, it turned out that the no-ancilla implementation of a Clifford unitary is possible without increasing the asymptotic complexity compared to a much simpler 
ancilla-enabled implementation.  An analogy can be drawn to the multiply-controlled Toffoli gate implemented using single-qubit and two-qubit gates, where it too turned out that a no-ancilla implementation is possible \cite{gidney2015} at the same asymptotic cost as a significantly more straightforward and practical ancilla-enabled implementation \cite{barenco1995elementary}.  Our study left the door open to discover a constant GT gate count implementation of the multiply-controlled Toffoli gate using few or no ancilla, as well as  implementation of
other commonly used unitary operations, such as the Quantum Fourier Transform.

\appendix

\section{Implementation of $n$-qubit Clifford operations}\label{appe:A}

In the main text we showed how to implement any $n$-qubit Clifford operation with $3|n$ using $25$ GCZ gates. Here we apply local optimizations to reduce the cost from $25$ to $20$ GCZ gates. 

First, let us show how to merge consecutive pairs of GCZ gates in the implementation
of the map $|v\ra \,{\mapsto}\, |Av\ra$ described in the main text.
First, consider the transformation
\begin{align*}
|x,y,z\ra & \xrightarrow{C_3(A_{00})} |A_{00}^{-1}z, y, A_{00}x \ra \\
& \xrightarrow{C_3(A_{22}A_{00})} |(A_{22}A_{00})^{-1}y, A_{22}z, A_{00}x\rangle \\ 
& \xrightarrow{W(A_{11}A_{22}A_{00})} 
|A_{00}x,A_{11}y, A_{22}z\ra.
\end{align*}
Denote the registers that hold the variables $x,y,z$ as $X,Y,Z$ respectively.
Then
the first $C_3$ gate is applied to $XZ$,
the second $C_3$ gate is applied to $XY$. 
We can merge the tailing GCZ from the first $C_3$ gate and the leading
GCZ from the second C3 gate. Indeed, these GCZs are used to implement $\cnotgate$ circuits
which have controls in $X$ and targets in $YZ$. 
The composition of these two $\cnotgate$ circuits is equivalent to a single GCZ gate
modulo conjugation by Hadamard gates on $YZ$.

As described in the main text, the $W(A)$ transformation is implemented as
\begin{align*}
|x,y,z\ra &\xrightarrow{C_3(B)} |B^{-1}y,Bx,z\ra \\
& \xrightarrow{C_3(D)} |B^{-1}y, D^{-1} z, DBx \ra \\
& \xrightarrow{C_3(B^{-1})} |B^{-1} DBx, D^{-1} z,y\ra \\
& \xrightarrow{C_3(D^{-1})} |z,D^{-1} B^{-1} DBx, y\ra \\ 
& = |z,Ax,y\ra.
\end{align*}
The first $C_3$ gate in the implementation of $W$ acts on $XY$. 
Since the second $C_3$ gate in the implementation of the map $|x,y,z\ra\to |A_{00}x,A_{11}y, A_{22}z\ra$
also acts on $XY$, we can merge
the tailing GCZ from this $C_3$ gate with the leading GCZ from the $W$ transformation. 
These two merges reduce the GCZ cost from $25$ to $23$. 

We can additionally merge consecutive pairs of GCZs in the implementation of $W$
by noting that 
\[
C_3(A)= \mathsf{CX}_{1,2}(A) \, \mathsf{CX}_{2,1}(A^{-1}) \,  \mathsf{CX}_{1,2}(A)
=\mathsf{CX}_{2,1}(A^{-1}) \, \mathsf{CX}_{1,2}(A) \,  \mathsf{CX}_{2,1}(A^{-1})
\]
for any invertible matrix $A$.
In other words, while implementing $C_3$, one has a freedom in choosing 
the control and the target register of the leading $\cnotgate$ gate (which is the same
for the tailing $\cnotgate$ gate).
We can exploit this freedom to realize $C_3$ gates that appear in the above implementation of $W$ as 
\[
\mbox{First $C_3$ gate:} \quad
 C_3(B) =   \mathsf{CX}_{X,Y}(A)\mathsf{CX}_{Y,X}(A^{-1})  \mathsf{CX}_{X,Y}(A),
\]
\[
\mbox{Second $C_3$ gate:} \quad
 C_3(D) =   \mathsf{CX}_{Z,Y}(D^{-1})\mathsf{CX}_{Y,Z}(D)  \mathsf{CX}_{Z,Y}(D^{-1}),
\]
\[
\mbox{Third $C_3$ gate:} \quad
 C_3(B^{-1}) =   \mathsf{CX}_{Z,X}(B^{-1})\mathsf{CX}_{X,Z}(B)  \mathsf{CX}_{Z,X}(B^{-1}), \text{ and}
\]
 \[
\mbox{Fourth $C_3$ gate:} \quad
 C_3(D^{-1}) =   \mathsf{CX}_{Y,X}(D)\mathsf{CX}_{X,Y}(D^{-1})  \mathsf{CX}_{Y,X}(D).
\]
Here the subscripts indicate registers acted upon by each $\cnotgate$ gate.
We can now merge GCZs that realize consecutive pairs 
\[
\{ \mathsf{CX}_{X,Y}(A),  \mathsf{CX}_{Z,Y}(D^{-1})\}, \text{ }
\{  \mathsf{CX}_{Z,Y}(D^{-1}),  \mathsf{CX}_{Z,X}(B^{-1})\},
\text{ and } \{ \mathsf{CX}_{Z,X}(B^{-1}),   \mathsf{CX}_{Y,X}(D)\}.
\]
These three merges reduce the GCZ cost further from $23$ to $20$.

\section{Implementation of the Toffoli gate by adaptive circuits}\label{appe:B}

Here we employ adaptive circuits to halve the GT gate count in the implementation of Toffoli-$n$ described in the main text.  Rather than uncomputing fractional $\cnotgate$s in Eq.~(9), one can instead measure each ancilla in the Pauli-$X$ basis immediately after applying $\mathsf{OR}_p$ (with $p\,{=}\,2$ or $\log(n)$, depending on the method of choice). Assuming $p\,{=}\,\log(n)$, the same arguments as above show that 
\[
(I_n \otimes \mathsf{OR}_p)  (X_0^{w(x)}  \otimes X_1^{w(x)}  \otimes  \cdots \otimes X_{p-1}^{w(x)})|x\ra \otimes |0^p\ra
= (\mathsf{OR}_n|x\ra)\otimes\left( X_0^{w(x)}  \otimes X_1^{w(x)} \otimes  \cdots \otimes X_{p-1}^{w(x)}|0^p\ra\right).
\]
Here $I_n$ is the identity operator on $n$ qubits. Suppose we  measure the $q$-th ancillary qubit in the $X$-basis and observe an outcome $m_q\,{\in}\, \{0,1\}$.  We need to check that this measurement does not introduce unwanted phase factors depending on $x$.  Using Eq.~(8) one can check that 
\begin{equation*}
\la m_q|H X_q^{w(x)}|0\ra=\frac1{\sqrt{2}} e^{i\pi m_qw(x)/2^q}
\end{equation*}
for any $m_q=0,1$. Thus all measurement outcomes are equally likely and the unwanted phase factor is a product of $e^{i\pi w(x)/2^q}$ over all ancillae $q$ such that $m_q{=}1$.  This phase factor can be cancelled by applying the single-qubit gate $\zgate^{1/2^q}$ 
to each of $n$ data qubits.  

By recursively applying the cancellation phase gates for each uncompute level after $\mathsf{OR}_2$ in the $O(\log^*(n))$-GT construction, one can then implement $\mathsf{OR}_n$ by  an adaptive circuit with $\log^*(n)$ GT gates and $\log(n)+O(\log(\log(n)))$ ancillae. For the $O(1)$-GT construction, note the $\mathsf{OR}_p$ circuit is of the form $U^{-1} L U$, acting on $p$ input and $n{-}p{-}1$ clean ancillary qubits, where the ancillae are returned in an all-zero state.  Here $U$ is a layer of $\cnotgate$ gates that computes EXORs of $p$ input variables into the ancilla and $L$ is a layer of single-qubit $Z$-rotations on the ancilla.  This implies the state $|\psi\ra$ of our quantum computer right after $L$ is
\begin{equation*}
|\psi\ra = (\mathsf{OR}_n|x\ra) \otimes U \left[\left( X_0^{w(x)}  \otimes X_1^{w(x)} \otimes  \cdots \otimes X_{p-1}^{w(x)}|0^p\ra\right)\otimes |0^{n-p-1}\ra\right].
\end{equation*}
Nominally, we are to apply to $|\psi\ra$ the operation $I_n \otimes [(H_p \otimes I_{n-p-1}) U^{-1}]$ prior to the measurements of $p$ qubits in the computational basis Z. 
Note that $U^{-1}$ returns each of the last $n{-}p{-}1$ qubits to the $|0\rangle$ state. Suppose we measure
each of these $|0\rangle$ qubits in the Pauli $X$-basis $\{|+\ra,|-\ra\}$, where $|+\ra=H|0\ra$ and 
$|-\ra=H|1\ra$. Clearly, such a measurement has no effect
on the remaining $p$ qubits. 
Consider some $\cnotgate$ that appears in $U^{-1}$.
Since the
target qubit of this $\cnotgate$ is measured in the Pauli $X$-basis, one can replace
the $\cnotgate$ by
the identity gate or
a single-qubit $\zgate$
gate acting on the control qubit by observing that
\[
\cnotgate |\phi\ra \otimes |+\ra = |\phi\ra \otimes |+\ra \quad \mbox{and} \quad 
\cnotgate |\phi\ra \otimes |-\ra = \zgate|\phi\ra \otimes |-\ra.
\]
Here $\phi$ is an arbitrary state of the control qubit.
Thus one can replace the entire circuit $U^{-1}$ by a product
of Pauli $\zgate$ gates applied to the first $p$
qubits. This saves one extra GT gate in the construction described above.
We conclude that one can implement $\mathsf{OR}_n$ by an adaptive circuit with two GT gates and $O(n)$ ancillae.

\end{document}